\newtheorem{theorem}{Theorem}
\newtheorem{corollary}{Corollary}
\newtheorem{definition}{Definition}
\newcommand{\revstephen}[1]{{\color{black} #1}}
\newcommand{\revboyd}[1]{{\color{black} #1}}
\newcommand{\eqdef}{\mathrel{\mathop=}:}
\title{\LARGE \bf
Sliced Distribution Matching based on Cumulative Distribution\\ Functions
with Applications to Control
}
\author{
  Alexandros E.~Tzikas,$^{1}$ Arec Jamgochian,$^{1, 3}$ Nazim Kemal Ure,$^{1}$\\
  Mykel J. Kochenderfer,$^{1}$ and Stephen P. Boyd$^{2}$
  \thanks{$^{1}$A. E. Tzikas (corresponding author), A. Jamgochian, N. Kemal Ure, and M. J. Kochenderfer are with the Department of Aeronautics and Astronautics, Stanford University, Stanford, CA 94305, U.S.A.
        {\tt\small \{alextzik, arec, ure, mykel\}@stanford.edu}}%
    \thanks{$^{2}$S. P. Boyd is with the Department of Electrical Engineering, Stanford University, Stanford, CA 94305, U.S.A.
        {\tt\small boyd@stanford.edu}}%
    \thanks{$^{3}$TerraAI, Redwood City, CA 94063, U.S.A.}
}
\begin{document}

\maketitle
\thispagestyle{empty}
\pagestyle{empty}

\begin{abstract}
Computing the similarity between two probability distributions is a recurring theme across control. We introduce a unified family of distances between the probability distributions of two random variables that is based on the discrepancy between the cumulative distribution functions of random linear one-dimensional projections of the random variables. Our proposed distance is interpretable, computationally simple, and admits a differentiable approximation. We establish asymptotic theoretical guarantees for sample-based estimators of the distance. We empirically study the use of the distance in a two-sample test and demonstrate its ability to distinguish different distributions. Finally, we show that the distance allows for simple gradient-based solutions in control by studying distribution steering and ergodic control\footnote{The code for all presented experiments can be found at \url{https://github.com/sisl/distribution_based_control}}.
\end{abstract}



\section{Introduction}\label{sec:prior_work}

Quantifying how far two probability distributions are from one another is a central task in many control applications, such as distribution steering \cite{rapakoulias2024discrete, balci2020covariance, rapakoulias2023discrete}, ergodic control \cite{mathew2011metrics, dressel2019tutorial}, and sequential decision-making \cite{araya2010pomdp}.
In distribution steering, the Kullback–Leibler divergence \cite{rapakoulias2024discrete}, the Wasserstein distance \cite{balci2020covariance}, or the difference between the moments of the distributions \cite{rapakoulias2023discrete} is often used. In ergodic control, the discrepancy between the Fourier coefficients of the two distributions has been suggested \cite{miller2013trajectory, dressel2019tutorial}, while the distance from the simplex center has been explored in sequential problems \cite{araya2010pomdp}.

\revboyd{For the case of distributions over the reals, various notions of similarity have been proposed, such as divergences \cite{nadjahi2020statistical}, the Wasserstein metric \cite{givens1984class}, the Kolmogorov distance \cite{grossi2025refereeing}, and the continuous ranked probability score (CRPS) \cite{gneiting2007strictly}. Many of these (e.g., the first three) extend to higher dimensions, but become computationally heavy. Others (e.g., the last one) cannot be easily extended to dimensions larger than one. In the machine learning literature \cite{kolouri2018sliced, kolouri2019generalized}, projection-based methods have been proposed to compute distances between distributions in high dimensions.  These methods average discrepancies between one-dimensional projections of the distributions and therefore are typically called \textit{sliced}. Any of the aforementioned discrepancies for distributions over the reals can be used for the one-dimensional projections. To the best of our knowledge, sliced techniques are not commonly used in the area of control.} 


\begin{figure}[t]
    \centering
    \includegraphics[width=0.4\textwidth]{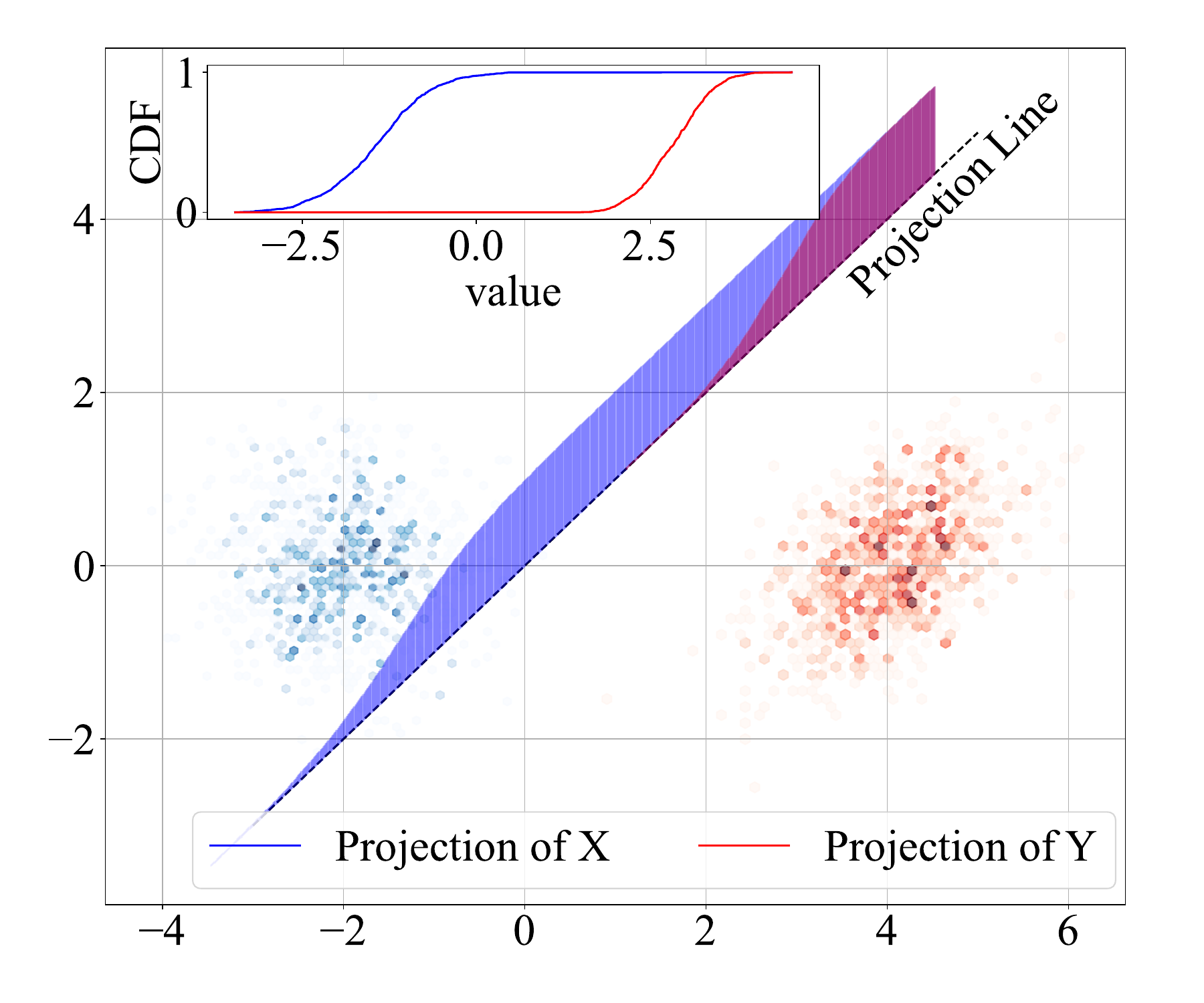}
    \caption{Overview of our proposed family of distances between the distributions of two random variables $X$ and $Y$. We average discrepancies between the CDFs of linear one-dimensional projections of the random variables. The distributions of the random variables are represented as sets of samples here. }
    \label{fig:overview}
\end{figure}

\textit{Our work unifies prior work in sliced distances and introduces a family of sliced distances between two distributions that is based on the cumulative distribution functions (CDFs) of their linear one-dimensional projections. In essence, we measure the expected dissimilarity between the CDFs of their one-dimensional linear projections. We allow for flexibility in choosing the dissimilarity function; various sliced distances are members of our proposed family, as discussed below. Any distance in our introduced family is interpretable and a differentiable estimate can be found using samples from the two distributions; we simply average (functions of) the difference in probability content over chosen half-spaces for the two distributions.} The main elements of our proposed distance are summarized in \Cref{fig:overview}. \textcite{su2015distances} proposed $L^p$ distances between CDFs for the space of one-dimensional distributions. However, they did not leverage projection-averaging to scale to higher dimensions and did not consider other distances between the CDFs. We take inspiration from distributionally robust control where linear projections of random variables are used to define interpretable and easy-to-handle constraints \cite{tzikas2025distributionally}. \textit{By providing the first (to our knowledge) proof-of-concept applications, through simple gradient-based algorithms, to distribution steering and ergodic control, we demonstrate that the introduced family of distances can be effectively used in control.}

The remainder of the paper is organized as follows. In \cref{sec:prior}, we discuss existing sliced distances. In \cref{sec:distance} we provide a unification by introducing the proposed family of distances between distributions. In \cref{sec:estimator}, we prove asymptotic results for estimators of the distance. In \cref{sec:algo}, we describe a practical algorithm to approximately compute the distance. In \cref{sec:two_sample}, we show the performance of this algorithm in a two-sample test. Finally, using the distance, in \cref{sec:distr_steer} we determine controllers for distribution steering, while in \cref{sec:ergodic} we compute a control sequence for ergodic control.

\section{Prior Work on Sliced Distances}\label{sec:prior}

We discuss various projection-based strategies in turn and compare them against our proposed family of distances. Applications of such methods span generative modeling, clustering, and barycenter computation, where full multivariate optimal transport is infeasible \cite{deshpande2018generative, kolouri2018sliced}.

The sliced Wasserstein (SW) distance makes multivariate Wasserstein distances tractable by projecting the distributions onto one-dimensional subspaces and averaging their Wasserstein distances. The Wasserstein distance can be calculated in closed form in one dimension and involves the integral of the pointwise absolute difference of the CDFs \cite{rabin2011wasserstein}. Building on this idea, \textcite{kolouri2019generalized} proposed generalized sliced Wasserstein (GSW) distances using nonlinear projections, as well as max-sliced variants that optimize, rather than average, over projections. Projection-robust sliced Wasserstein distances that seek maximally discriminative subspaces have also ben suggested \cite{lin2021projection}. \textit{The SW is included in our introduced family of distances.}

The Gromov–Wasserstein distance looks at the minimum expected absolute difference of distances between points generated by each distribution. The minimum is over the coupling of the distributions, setting the two distributions as the marginals. In the one-dimensional case, under assumptions, the distance can be computed by solving a quadratic assignment problem. By performing random one-dimensional projections of the two distributions and averaging their Gromov–Wasserstein distance we obtain the sliced variant \cite{titouan2019sliced}. The sliced Sinkhorn distance relies on the entropy-regularized optimal transport cost and can be computed by solving the dual of a convex program \cite{nadjahi2020statistical}. \textit{Our distance does not require solving any inner optimization problem.}

Closely related to our proposed family of sliced distances are two statistics used for multivariate two-sample tests; the projection-averaged Cramer-von-Mises statistic and the projection-averaged Kolmogorov-Smirnov (KS) statistic. The former applies the discrepancy function $d(f,g) = \int_{-\infty}^{\infty} (f(x)-g(x))^2 dx$ to the CDFs of the random one-dimensional projections and is shown to have an efficient closed-form solution by~\textcite{kim2020robust} and by~\textcite{li2020projective}. The latter  applies the discrepancy function $d(f,g) = \sup_{x} |f(x)-g(x)|$ to the CDFs of the random one-dimensional projections \cite{grossi2025refereeing}.

\section{The Family of Distances between Distributions}\label{sec:distance}

Suppose $X: \Omega_X \rightarrow \mathbb{R}^n$ and $Y:\Omega_Y \rightarrow \mathbb{R}^n$ are two random variables, defined over potentially different probability spaces $\left(\Omega_X, \mathcal{F}_X, \mathbb{P}_X \right)$ and $\left(\Omega_Y, \mathcal{F}_Y, \mathbb{P}_Y \right)$, respectively.
We introduce a class of distances between the laws (or induced probability measures on $\mathbb{R}^n$) of $X$ and $Y$ that is based on linear one-dimensional projections of $X$ and $Y$.

We first define the distance for the set $\mathcal{C}$ of univariate CDFs $f: \mathbb{R} \rightarrow \left[0, 1\right]$. 
\begin{definition}\label{def:distance_func}
(\textbf{Distance in $\mathcal{C}$}). A function $d: \mathcal{C} \times \mathcal{C} \rightarrow \mathbb{R} \cup \{\infty\}$ is a distance for the set $\mathcal{C}$ if and only if it satisfies the following axioms:
\begin{enumerate}
    \item \textit{Non-negativity}: $d(f,g) \geq 0$,
    \item \textit{Symmetry}: $d(f,g)=d(g,f)$,
    \item \textit{Triangle inequality}: $d(f,g) \leq d(f,h) + d(h, g)$,
    \item \textit{Identity of indiscernibles}: $d(f,g) = 0 \Leftrightarrow f(x) = g(x)\ \mathrm{for\ all}\ x$.
\end{enumerate}
\end{definition}

Next, we define the one-dimensional linear projection of an $n$-dimensional random variable.
\begin{definition}
    (\textbf{One-dimensional linear projection of a random variable}). Let $\Tilde{q} \in \mathbb{R}^n$ be a vector. Then, $X_{\Tilde{q}}:=\Tilde{q}^\top X$ is a linear one-dimensional projection of $X$.
\end{definition}

By considering $X_{\Tilde{q}}$ as a weighted sum of the coordinates of $X$, it is obvious that $X_{\Tilde{q}}$ is a random variable over $\left(\Omega_X, \mathcal{F}_X, \mathbb{P}_X \right)$, as a Borel function of $X$. Because $X_{\Tilde{q}}$ is a one-dimensional random variable, it admits a CDF, $F_X^{\Tilde{q}}(t):=\mathbb{P}\left( \Tilde{q}^\top X \leq t \right)$.

We now define our measure of similarity between the laws (distributions) of $X$ and $Y$. 
\begin{definition}
     (\textbf{Measure of similarity between distributions}). Suppose $\mathcal{P}_q:=(\Omega_q, \mathcal{F}_q, \mathbb{P}_q)$ is a probability space and $q: \Omega_q \rightarrow \mathbb{S}^n$ is a random variable over $\mathcal{P}_q$, where $\mathbb{S}^n$ is the unit sphere in $n$ dimensions.
     Suppose $d$ is a distance in $\mathcal{C}$. Letting $\omega' \in \Omega_q$, we consider the random variable $d(F_X^{q(\omega')}, F_Y^{q(\omega')})$ over $\mathcal{P}_q$. We define the following measure of similarity between the laws of random variables $X$ and $Y$, respectively:
     \begin{equation}\label{eq:dist_main}
         \Delta(X, Y) := \mathbb{E}_{q}\left[ d(F_X^{q}, F_Y^{q})\right].
     \end{equation}
\end{definition}

Note that the random variable $q$ takes values only on the unit sphere. This is a choice justified by the fact that $F_X^q(t) = F_X^{\frac{q}{\lVert q \rVert}}(t / \lVert q \rVert)$ for $q \neq 0$. Moreover, although presented in the context of $\mathbb{R}^n$, $\Delta(X,Y)$ is also applicable for distributions over matrices because matrices can be equivalently viewed as vectors (by stacking their columns or rows). 
\revboyd{It is important to note that $\Delta(X,Y)$ depends on the distribution of the random variable $q$ on the unit sphere. A common choice is to use a uniformly distributed $q$, but, as we will see later on, this is not the only one.}

We now prove that $\Delta(X,Y)$ is a distance in the space of distributions.
\begin{theorem}
    (\textbf{Distance property}). $\Delta(X,Y)$, as given in \eqref{eq:dist_main}, is a distance in the space of distributions.
\end{theorem}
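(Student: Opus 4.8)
The plan is to verify the four axioms of \Cref{def:distance_func} for $\Delta$, now read on the space of laws (so the identity of indiscernibles becomes ``$\Delta(X,Y)=0$ if and only if $X$ and $Y$ have the same law''). Three of the axioms transfer immediately from the corresponding property of the one-dimensional distance $d$ through monotonicity and linearity of expectation. For non-negativity, $d(F_X^{q}, F_Y^{q}) \ge 0$ for every realization $q(\omega')$, so its expectation over $\mathcal{P}_q$ is nonnegative. For symmetry, $d(F_X^{q}, F_Y^{q}) = d(F_Y^{q}, F_X^{q})$ holds pointwise and the equality survives taking expectations. For the triangle inequality, I would introduce a third random variable $Z$ with projected CDFs $F_Z^{q}$; the pointwise bound $d(F_X^{q}, F_Y^{q}) \le d(F_X^{q}, F_Z^{q}) + d(F_Z^{q}, F_Y^{q})$ holds for every $q(\omega')$, and integrating against $\mathbb{P}_q$ (with sums involving $+\infty$ interpreted in $\mathbb{R}\cup\{\infty\}$) gives $\Delta(X,Y) \le \Delta(X,Z) + \Delta(Z,Y)$.

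The substantive axiom is the identity of indiscernibles. The easy direction is that if $X$ and $Y$ share a law, then every projection $q^\top X$ and $q^\top Y$ has the same one-dimensional law, so $F_X^{q} = F_Y^{q}$ for all $q$; the identity of indiscernibles for $d$ makes the integrand identically zero, whence $\Delta(X,Y)=0$. The forward direction is the heart of the proof. Assuming $\Delta(X,Y)=0$, the integrand $d(F_X^{q}, F_Y^{q})$ is a nonnegative random variable with zero expectation over $\mathcal{P}_q$, hence vanishes for $\mathbb{P}_q$-almost every $q$; by the identity of indiscernibles for $d$, this yields $F_X^{q}(t) = F_Y^{q}(t)$ for all $t$, for $\mathbb{P}_q$-almost every direction $q$ on the sphere.

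The final and most delicate move is to upgrade almost-everywhere agreement of the projected CDFs to equality of the full $n$-dimensional laws. I would pass to characteristic functions: $F_X^{q} = F_Y^{q}$ implies $\mathbb{E}[e^{i s\, q^\top X}] = \mathbb{E}[e^{i s\, q^\top Y}]$ for all $s\in\mathbb{R}$, i.e. $\phi_X(sq) = \phi_Y(sq)$, where $\phi_X,\phi_Y$ are the characteristic functions of $X,Y$. For each fixed $s$, the map $q \mapsto \phi_X(sq) - \phi_Y(sq)$ is continuous on $\mathbb{S}^n$ and vanishes on a $\mathbb{P}_q$-full-measure set of directions, so it vanishes identically on $\mathbb{S}^n$ once that set is dense; letting $s$ range over $\mathbb{R}$ then gives $\phi_X = \phi_Y$ on all of $\mathbb{R}^n$, and the uniqueness theorem for characteristic functions (the Cram\'er--Wold device) forces $X$ and $Y$ to have the same law. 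The main obstacle is exactly this density step: it requires the distribution of $q$ to have full support on the sphere. I would therefore make that hypothesis explicit---it holds for the uniform $q$ emphasized in the paper, for which a full-measure set is automatically dense---before invoking continuity of $\phi_X - \phi_Y$ to conclude.
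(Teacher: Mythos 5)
Your proof is correct and follows the same overall route as the paper's: the first three axioms are inherited pointwise from $d$ and pass through the expectation, and the identity of indiscernibles is reduced---via the observation that a nonnegative integrand with zero expectation vanishes $\mathbb{P}_q$-almost everywhere (the paper phrases this through Markov's inequality, which is equivalent)---to agreement of the projected CDFs for almost every direction, followed by the Cram\'er--Wold device under a full-support assumption on $q$.

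Where you genuinely go beyond the paper is the step you correctly single out as the delicate one. The paper passes from ``$F_X^{q} = F_Y^{q}$ with probability $1$'' to ``$\tilde q^\top X \stackrel{D}{=} \tilde q^\top Y$ for \emph{all} $\tilde q \in \mathbb{S}^n$'' with only the phrase ``assuming the support of $q$ is all $\mathbb{S}^n$,'' leaving the almost-everywhere-to-everywhere upgrade unjustified. Your characteristic-function argument supplies exactly the missing reasoning: under full support, any $\mathbb{P}_q$-full-measure set is dense in $\mathbb{S}^n$ (every nonempty open subset has positive measure, so the full-measure set must meet it); for each fixed $s$ the map $q \mapsto \phi_X(sq) - \phi_Y(sq)$ is continuous and vanishes on that dense set, hence identically; letting $s$ range over $\mathbb{R}$ gives $\phi_X = \phi_Y$ on all of $\mathbb{R}^n$, and uniqueness of characteristic functions finishes. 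You are also right to insist that full support of $q$ be stated as an explicit hypothesis---the paper introduces it mid-proof even though the theorem statement omits it. In short: same decomposition and same key lemma, but your version closes a real (if routine) gap in the published argument.
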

\begin{proof}
The non-negativity property of $\Delta$ follows from the non-negativity property of $d$. Similarly, the symmetry property of $\Delta$ follows from the symmetry property of $d$. 

We now prove the triangle inequality. Consider an additional random variable $Z: \Omega_Z \rightarrow \mathbb{R}^n$ over $\left(\Omega_Z, \mathcal{F}_Z, \mathbb{P}_Z \right)$. Then:
\begin{equation}\label{eq:triangle_ineq}
\begin{aligned}
    \Delta(X,Y) &:= \mathbb{E}_{q}\left[ d(F_X^{q}, F_Y^{q})\right] \\
    & \leq \mathbb{E}_{q} \left[ d(F_X^{q}, F_Z^{q}) + d(F_Z^{q}, F_Y^{q})\right] \\
    & = \Delta(X,Z) + \Delta(Z, Y).
\end{aligned}
\end{equation}

Finally, we prove the identity of indiscernibles.

($\Rightarrow$). Suppose $X$ and $Y$ are equal in distribution: $X\stackrel{D}{=}Y$. Then, for all $\Tilde{q} \in \mathbb{S}^n$, it holds that $F_X^{\Tilde{q}} = F_Y^{\Tilde{q}}$ pointwise. This implies that $d (F_X^{\Tilde{q}}, F_Y^{\Tilde{q}})=0$ for all $\Tilde{q} \in \mathbb{S}^n$, by the zero identity of $d$. Therefore
    $\mathbb{E}_{q}\left[ d(F_X^{q}, F_Y^{q})\right]=0$ and we obtain $\Delta(X,Y)=0$.
    
($\Leftarrow$). Suppose $\Delta(X,Y)=0$. Then $\mathbb{E}_{q}\left[ d(F_X^{q}, F_Y^{q})\right]=0$. By Markov's inequality, for any $\alpha >0$
    \begin{equation}
        \mathbb{P}_q\left(d(F_X^{q}, F_Y^{q}) \geq \alpha\right) \leq \dfrac{\mathbb{E}_{q}\left[ d(F_X^{q}, F_Y^{q})\right]}{\alpha}=0. 
    \end{equation}
    Therefore $d(F_X^{q}, F_Y^{q}) =0 $ with probability $1$, which implies that $F_X^{q} = F_Y^{q}$ pointwise with probability $1$. Assuming the support of $q$ is all $\mathbb{S}^n$, we obtain that for all $\Tilde{q} \in \mathbb{S}^n$: $\Tilde{q}^\top X \stackrel{D}{=}\Tilde{q}^\top Y $, which implies that for all $\Tilde{q}\in \mathbb{R}^n$: $\Tilde{q}^\top X \stackrel{D}{=}\Tilde{q}^\top Y $. By applying the Cramér--Wold theorem \cite{measure}, it follows that $X \stackrel{D}{=}Y$. Intuitively, if the two random vectors $X$ and $Y$ agree on each projection, then they have the same distribution.
\end{proof}

\subsection*{Specific Cases}
Any function $d$ that satisfies the requirements of \Cref{def:distance_func} can be used in $\Delta(X,Y)$. Here, we outline a few choices.
\begin{itemize}
    \item Suppose $d(f,g)=\sup_x |f(x)-g(x)|$. In this case, $\Delta(X,Y)$ is the expected KS distance between linear one-dimensional projections of $X$ and $Y$. This choice of $d$ can be particularly helpful, if we care about the maximum disagreement between the CDFs and not where the disagreement occurs.
    \item Suppose that $d(f,g) = \int_{-\infty}^\infty |f(x)-g(x)| dx$. Then, $\Delta(X,Y)$ captures the expected total difference between the CDFs of linear one-dimensional projections of $X$ and $Y$. This is equivalent to the 1-Wasserstein distance on distributions. Note that we can generalize to $d(f,g) = \int_{-\infty}^{\infty} c(x)|f(x)-g(x)|dx$ for function $c$ such that $c(x) >0$ for all $x$. This can be helpful, if there are specific values of $x$ for which we care about the agreement of the two CDFs. 
\end{itemize}

\section{A Sample-based Estimator}\label{sec:estimator}
We define a sample-based estimator of $\Delta(X,Y)$. For simplicity we will assume that $F_Y^{\Tilde{q}}$ is known for all $\Tilde{q}$. This can occur, for example, if we wish to compute the distance of the distribution of $X$ to a known distribution; the distribution of $Y$.

Suppose we are given the samples (represented as random variables) $\left(q_{i}, X_{i, 1}, \ldots, X_{i, N} \right)_{i=1}^H$ in a probability space with probability measure $\mathbb{Q}=\mathbb{P}_q \times (\times_{k=1}^N \mathbb{P}_X )$. The inner sample size $N$ will be used to estimate CDFs, while the outer sample size $H$ will be used to estimate the expectation. The samples $q_{i}$ are assumed independent and identically distributed to $q$. Similarly, the samples $X_{i,j}$ are assumed independent and identically distributed to $X$. Further, $\left(q_{i}\right)_{i=1}^H \perp \left( X_{i, 1}, \ldots, X_{i, N}\right)_{i=1}^H$, where $\perp$ denotes independence. 

For each $i$, we can obtain an estimator of $F_X^{q_{i}, N}$ by calculating the empirical CDF of the samples $(X_{i,1}, \dots, X_{i, N})$:
\begin{equation}
    \hat{F}_X^{q_{i}, N}(t) \eqdef \dfrac{1}{N}\sum_{j=1}^N \mathbbm{1}{\left( q_{i}^\top X_{i, j} \leq t\right)}.
\end{equation}
We drop the dependence on $N$ for simplicity below.
We can now define an estimator for $\Delta(X,Y)$ as follows.
\begin{definition}
    (\textbf{Estimator of $\Delta(X,Y)$}). Let $\hat{\Delta}(X,Y)$ be the following sample-based estimator of $\Delta(X,Y)$:
    \begin{equation}
        \hat{\Delta}(X,Y) \eqdef \dfrac{1}{H} \sum_{i=1}^H d(\hat{F}_X^{q_{i}}, F_Y^{q_{i}}).
    \end{equation}
\end{definition}

We study the asymptotic behavior of the estimator. First, we determine the asymptotic convergence of its mean and variance, and then provide an approximation for its asymptotic distribution.
\begin{theorem}\label{thm:asymptotic_conv_mean}
    (\textbf{Asymptotic mean of estimator}). Let $g: \mathcal{C} \rightarrow \mathbb{R}$. Suppose $d(\cdot, g): \mathcal{C} \rightarrow \mathbb{R}$ is continuous with respect to the sup-norm in its first argument and bounded above uniformly by a constant $M$ for all $g$. Then, the estimator $\hat{\Delta}(X,Y)$ is asymptotically unbiased, as $N \rightarrow \infty$.
\end{theorem}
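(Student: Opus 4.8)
The plan is to show $\mathbb{E}_{\mathbb{Q}}[\hat{\Delta}(X,Y)] \to \Delta(X,Y)$ as $N \to \infty$ by reducing to a single summand and then interchanging limit and expectation via the bounded convergence theorem. Since the $H$ terms $d(\hat{F}_X^{q_{i}}, F_Y^{q_{i}})$ share the same joint law under $\mathbb{Q}$ (each block $(q_i, X_{i,1}, \ldots, X_{i,N})$ is identically distributed), linearity of expectation gives $\mathbb{E}_{\mathbb{Q}}[\hat{\Delta}(X,Y)] = \mathbb{E}_{\mathbb{Q}}[d(\hat{F}_X^{q_{1}}, F_Y^{q_{1}})]$, so it suffices to analyze one term.

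Next I would condition on the projection direction. Using the tower property, which is valid because $(q_i)_{i=1}^H$ is independent of $(X_{i,j})$, I write $\mathbb{E}_{\mathbb{Q}}[d(\hat{F}_X^{q_{1}}, F_Y^{q_{1}})] = \mathbb{E}_q\!\left[\mathbb{E}_X[d(\hat{F}_X^{q_{1}}, F_Y^{q_{1}}) \mid q_{1}]\right]$. Fixing a realization $q_{1} = \Tilde{q}$, independence ensures that the scalars $\Tilde{q}^\top X_{1,1}, \ldots, \Tilde{q}^\top X_{1,N}$ are i.i.d.\ with CDF $F_X^{\Tilde{q}}$, so $\hat{F}_X^{\Tilde{q}}$ is exactly the empirical CDF of an i.i.d.\ sample. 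The Glivenko--Cantelli theorem then yields $\lVert \hat{F}_X^{\Tilde{q}} - F_X^{\Tilde{q}} \rVert_\infty \to 0$ almost surely as $N \to \infty$.

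The heart of the argument combines this convergence with the hypotheses on $d$. By continuity of $d(\cdot, g)$ in the sup-norm, taken with $g = F_Y^{\Tilde{q}}$, the almost-sure sup-norm convergence above gives $d(\hat{F}_X^{\Tilde{q}}, F_Y^{\Tilde{q}}) \to d(F_X^{\Tilde{q}}, F_Y^{\Tilde{q}})$ almost surely. Since $0 \leq d(\hat{F}_X^{\Tilde{q}}, F_Y^{\Tilde{q}}) \leq M$, where the lower bound is the non-negativity axiom of \Cref{def:distance_func} and the upper bound is the uniform bound $M$, the bounded convergence theorem lets me pass the limit inside the inner expectation: $\mathbb{E}_X[d(\hat{F}_X^{\Tilde{q}}, F_Y^{\Tilde{q}})] \to d(F_X^{\Tilde{q}}, F_Y^{\Tilde{q}})$ for each fixed $\Tilde{q}$.

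Finally I would apply bounded convergence a second time, now to the outer integral over $q$. The conditional expectations are again uniformly bounded by $M$ and, by the previous step, converge pointwise in $\Tilde{q}$ to $d(F_X^{\Tilde{q}}, F_Y^{\Tilde{q}})$, so $\mathbb{E}_q[\mathbb{E}_X[\,\cdot \mid q_{1}]] \to \mathbb{E}_q[d(F_X^{q}, F_Y^{q})] = \Delta(X,Y)$. I expect the main obstacle to be justifying these two interchanges of limit and expectation; this is precisely where the uniform upper bound $M$, together with non-negativity, is indispensable, as it supplies a constant dominating function for both applications of bounded convergence. A secondary point worth noting is the measurability of $\Tilde{q} \mapsto d(F_X^{\Tilde{q}}, F_Y^{\Tilde{q}})$, which is already implicit in $\Delta(X,Y)$ being well defined.
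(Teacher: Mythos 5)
Your proposal is correct and follows essentially the same route as the paper's proof: reduce to a single summand by identical distribution, condition on the projection via the tower property, apply Glivenko--Cantelli and the continuity of $d(\cdot,g)$ to get almost-sure convergence, and then interchange limit and expectation twice (inner over the $X$-samples, outer over $q$) using the uniform bound $M$. The only cosmetic difference is that you invoke the bounded convergence theorem where the paper cites dominated convergence with the constant dominating function, which is the same argument.
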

\begin{proof}
    Obviously:
    \begin{equation*}
        \mathbb{E}\left[ \hat{\Delta}(X,Y)\right] = \mathbb{E}\left[ d(\hat{F}_X^{q_{i}}, F_Y^{q_{i}})\right].
    \end{equation*} 
    By the tower property of expectation
    \begin{equation}
    \begin{aligned}
        \mathbb{E}\left[ d(\hat{F}_X^{q_{i}}, F_Y^{q_{i}})\right] = \mathbb{E}\left[ \mathbb{E}\left[ d(\hat{F}_X^{q_{i}}, F_Y^{q_{i}}) \mid q_{i}\right]\right],
    \end{aligned}
    \end{equation}
    and, in turn, by the independence assumptions, for any realization $\Tilde{q}$ of $q$,
    \begin{equation}\label{eq:tower}
        \mathbb{E}\left[ d(\hat{F}_X^{q_{i}}, F_Y^{q_{i}}) \mid q_{i}=\Tilde{q}\right] = \mathbb{E}_\mathbb{P}\left[ d(\hat{F}_X^{\Tilde{q}}, F_Y^{\Tilde{q}})\right],
    \end{equation}
    where $\mathbb{P} = \times_{i=1}^N \mathbb{P}_X$. Eq.~\eqref{eq:tower} states that the conditional expectation of the distance for a given realization of $q$ is equal to the expected distance for the given linear one-dimensional projection.
    
    By the Glivenko–-Cantelli theorem,
    \begin{equation}
        \mathbb{P}\left(\mathrm{sup}_x |\hat{F}_X^{\Tilde{q}}(x) - F_X^{\Tilde{q}}(x)|\rightarrow0\right)=1,
    \end{equation}
    where the limit is for $N \rightarrow \infty$.
    The continuity assumption of $d$ implies that 
    \begin{equation}
    \begin{aligned}
        1 &= \mathbb{P}\left( \mathrm{sup}_x |\hat{F}_X^{\Tilde{q}}(x) - F_X^{\Tilde{q}}(x)|\rightarrow0\right)\\
        &\leq \mathbb{P}\left( d(\hat{F}_X^{\Tilde{q}}, F_Y^{\Tilde{q}}) \rightarrow d(F_X^{\Tilde{q}}, F_Y^{\Tilde{q}})\right) 
    \end{aligned}
    \end{equation}
    and therefore $d(\hat{F}_X^{\Tilde{q}}, F_Y^{\Tilde{q}}) \stackrel{a.s.}{\longrightarrow} d(F_X^{\Tilde{q}}, F_Y^{\Tilde{q}})$ as $N \rightarrow \infty$. Because $d$ is bounded, we apply the dominated convergence theorem
    to get 
    \begin{equation}
        \mathbb{E}_\mathbb{P}\left[ d(\hat{F}_X^{\Tilde{q}}, F_Y^{\Tilde{q}}) \right] \rightarrow  d(F_X^{\Tilde{q}}, F_Y^{\Tilde{q}})\ \mathrm{as}\ N \rightarrow \infty.
    \end{equation}
    This result holds for every realization $\Tilde{q}$.
    
    The boundedness of $d$ implies that $\mathbb{E}_\mathbb{P}\left[ d(\hat{F}_X^{\Tilde{q}}, F_Y^{\Tilde{q}}) \right]$ is also bounded by the same constant for any realization $\Tilde{q}$. Therefore, the random variable $\mathbb{E}\left[ d(\hat{F}_X^{q_{i}}, F_Y^{q_{i}}) \mid q_{i}\right]$ is bounded and, by applying the dominated convergence theorem to it, we get
    \begin{equation}
    \mathbb{E}\left[ \hat{\Delta}(X,Y)\right]  \rightarrow \mathbb{E}_q\left[ d(F_X^{q}, F_Y^{q})\right]= \Delta(X,Y)
    \end{equation}
    as $N \rightarrow \infty$.
    \end{proof}
    
Various choices of $d$ satisfy the continuity condition required by \Cref{thm:asymptotic_conv_mean}. For example, suppose $d(f,g) = \max_x |f(x)-g(x)|$. Consider the sequence of functions $f_n(x)$ such that $\max_x |f_n(x) - f(x)| \rightarrow 0$. Then:
\begin{equation*}
    \begin{aligned}
    &d(f_n, g) \leq \max_x |f_n(x)-f(x)| + \max_x |f(x)-g(x)|,\\
    &d(f, g) \leq  \max_x |f_n(x)-f(x)| + \max_x |f_n(x)-g(x)|,
    \end{aligned}
\end{equation*}
and we obtain
\begin{equation*}
    \lim_{n \rightarrow \infty} d(f_n, g) \leq d(f,g), \quad \lim_{n \rightarrow \infty} d(f_n, g) \geq d(f,g),
\end{equation*}
i.e., $\lim_{n \rightarrow \infty} d(f_n, g) = d(f,g)$. The same result holds for $d(f,g) = \int_{-\infty}^{\infty} c(x)|f(x)-g(x)|dx$, where $c$ is a probability density function (PDF). To see this, the reader can follow the steps outlined above.

Furthermore, because CDFs are constrained between $0$ and $1$, we see that both $d(f,g) = \max_x |f(x)-g(x)|$ and $d(f,g) = \int_{-\infty}^{\infty} c(x)|f(x)-g(x)|dx$ satisfy $d(f,g) \leq 1$, i.e., are bounded.

\begin{corollary}\label{cor:asymptotic_cov_var}
(\textbf{Asymptotic variance of estimator}). Let $g: \mathcal{C} \rightarrow \mathbb{R}$. Suppose $d(\cdot, g): \mathcal{C} \rightarrow \mathbb{R}$ is continuous with respect to the sup-norm in its first argument and bounded above uniformly by a constant $M$ for all $g$. Then, 
\begin{equation}
    \mathrm{var}\left[ \hat{\Delta}(X,Y) \right] \rightarrow \frac{1}{H}
    \mathrm{var}_q\left[ d(F_X^q, F_Y^q)\right]\ \mathrm{as}\ N \rightarrow \infty.
\end{equation}  
\end{corollary}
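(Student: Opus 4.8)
The plan is to exploit the fact that, under the stated independence assumptions, the summands of $\hat{\Delta}(X,Y)$ are independent and identically distributed. Writing $D_i := d(\hat{F}_X^{q_i}, F_Y^{q_i})$, each $D_i$ is a function of the tuple $(q_i, X_{i,1}, \ldots, X_{i,N})$ alone, and these tuples involve disjoint, mutually independent collections of random variables; since moreover $q_i \stackrel{D}{=} q$ and $X_{i,j} \stackrel{D}{=} X$, the $D_i$ are i.i.d. Hence all cross-covariances vanish and
\begin{equation*}
    \mathrm{var}\left[\hat{\Delta}(X,Y)\right] = \frac{1}{H^2}\sum_{i=1}^H \mathrm{var}\left[D_i\right] = \frac{1}{H}\,\mathrm{var}\left[D_1\right],
\end{equation*}
so that the factor $1/H$ is exact and the entire limit reduces to showing $\mathrm{var}\left[D_1\right] \to \mathrm{var}_q\left[d(F_X^q, F_Y^q)\right]$ as $N \to \infty$.

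Next I would write $\mathrm{var}[D_1] = \mathbb{E}[D_1^2] - (\mathbb{E}[D_1])^2$ and treat the two moments separately. The first-moment term is already handled: \Cref{thm:asymptotic_conv_mean} gives $\mathbb{E}[D_1] = \mathbb{E}[\hat{\Delta}(X,Y)] \to \mathbb{E}_q[d(F_X^q, F_Y^q)]$, and since $x \mapsto x^2$ is continuous this yields $(\mathbb{E}[D_1])^2 \to (\mathbb{E}_q[d(F_X^q, F_Y^q)])^2$. It therefore remains only to establish the second-moment convergence $\mathbb{E}[D_1^2] \to \mathbb{E}_q[(d(F_X^q, F_Y^q))^2]$.

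For this I would replay the argument of \Cref{thm:asymptotic_conv_mean} verbatim with the map $d(\cdot, g)^2$ in place of $d(\cdot, g)$. The key observation is that the squared map inherits both hypotheses used there: it is continuous in the sup-norm in its first argument, being the composition of the continuous $d(\cdot, g)$ with the continuous squaring map $x \mapsto x^2$; and, because $0 \le d \le M$ (non-negativity from \Cref{def:distance_func} together with the uniform upper bound), it is bounded by $M^2$. Conditioning on $q_1 = \tilde{q}$ and invoking the Glivenko--Cantelli theorem, $d(\hat{F}_X^{\tilde{q}}, F_Y^{\tilde{q}})^2 \stackrel{a.s.}{\longrightarrow} d(F_X^{\tilde{q}}, F_Y^{\tilde{q}})^2$; the $M^2$ bound then licenses dominated convergence over the inner samples, giving convergence of the conditional second moment for every $\tilde{q}$, and a second application of dominated convergence over $q$ (justified by the same uniform bound) yields $\mathbb{E}[D_1^2] \to \mathbb{E}_q[(d(F_X^q, F_Y^q))^2]$. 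Combining the two moment limits gives $\mathrm{var}[D_1] \to \mathrm{var}_q[d(F_X^q, F_Y^q)]$ and completes the claim.

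The only genuine obstacle is the transfer of the regularity hypotheses from $d$ to $d^2$; once continuity and the $M^2$ bound are in hand, the remainder is a mechanical repetition of the mean argument. The one point to keep track of is that the corollary's conclusion holds for \emph{fixed} $H$, with the limit taken only in $N$, so the $1/H$ prefactor must emerge exactly from the i.i.d. reduction rather than as part of any limit --- which is precisely why the independence step is carried out first.
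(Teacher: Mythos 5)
Your proposal is correct and follows essentially the same route as the paper: the exact $1/H$ factor from the i.i.d.\ structure of the summands, the decomposition $\mathrm{var}[D_1] = \mathbb{E}[D_1^2] - (\mathbb{E}[D_1])^2$, the second moment handled by rerunning the mean argument (Glivenko--Cantelli plus two applications of dominated convergence) with $d^2$ in place of $d$, and \Cref{thm:asymptotic_conv_mean} for the squared-mean term. If anything, you are more explicit than the paper on two points it leaves implicit --- why $d(\cdot,g)^2$ inherits sup-norm continuity and the bound $M^2$, and why independence makes the $1/H$ prefactor exact --- but the substance is identical.
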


\begin{proof}
    Obviously
    \begin{equation}\mathrm{var}\left[ \hat{\Delta}(X,Y)\right] = \dfrac{1}{H} \mathrm{var}\left[ d(\hat{F}_X^{q_{i}}, F_Y^{q_{i}})\right]
    \end{equation}
    and
    \begin{equation}
    \begin{aligned}
        &\mathrm{var}\left[ d(\hat{F}_X^{q_{i}}, F_Y^{q_{i}})\right] = \\
        &\mathbb{E}\left[ d^2(\hat{F}_X^{q_{i}}, F_Y^{q_{i}})\right] - \mathbb{E}\left[ d(\hat{F}_X^{q_{i}}, F_Y^{q_{i}})\right]^2.
    \end{aligned}
    \end{equation}
    We study the asymptotic behavior of the two terms on the right-hand side in turn. By a similar argument as in \Cref{thm:asymptotic_conv_mean}, we obtain
    \begin{equation}
        \mathbb{P}\left(d^2(\hat{F}_X^{\Tilde{q}}, F_Y^{\Tilde{q}}) \rightarrow d^2(F_X^{\Tilde{q}}, F_Y^{\Tilde{q}})\right) = 1 
    \end{equation}
    for all realizations $\Tilde{q}$. Because $d$ is bounded, by applying the dominated convergence theorem, we get
    \begin{equation}
        \mathbb{E}_\mathbb{P}\left[ d^2(\hat{F}_X^{\Tilde{q}}, F_Y^{\Tilde{q}}) \right] \rightarrow  d^2(F_X^{\Tilde{q}}, F_Y^{\Tilde{q}})\ \mathrm{as}\ N \rightarrow \infty.
    \end{equation}
    A second application of the dominated convergence theorem, this time for the sequence of random variables $\mathbb{E}\left[ d^2(\hat{F}_X^{q_{i}}, F_Y^{q_{i}}) \mid q_i\right]$, gives
    \begin{equation}\label{eq:mean_square}
        \mathbb{E}\left[ d^2(\hat{F}_X^{q_{i}}, F_Y^{q_{i}})\right] \rightarrow \mathbb{E}_q \left[ d^2(F_X^{q}, F_Y^{q})\right]\ \mathrm{as}\ N \rightarrow \infty.
    \end{equation}
    By \Cref{thm:asymptotic_conv_mean}, we directly get
    \begin{equation}\label{eq:square_mean}
        \mathbb{E}\left[ d(\hat{F}_X^{q_{i}}, F_Y^{q_{i}})\right]^2 \rightarrow \Delta^2(X,Y)\ \mathrm{as}\ N \rightarrow \infty.
    \end{equation}
    By combining \eqref{eq:mean_square} and \eqref{eq:square_mean}, we get the result.
\end{proof}

\Cref{thm:asymptotic_conv_mean} and \Cref{cor:asymptotic_cov_var} provide the asymptotic mean and variance of $\hat{\Delta}(X,Y)$, respectively. We can also determine the asymptotic distribution of $\hat{\Delta}(X,Y)$.

\begin{theorem}\label{thm:clt}
(\textbf{Asymptotic distribution of estimator}). Let $g: \mathcal{C} \rightarrow \mathbb{R}$. Suppose $d(\cdot, g): \mathcal{C} \rightarrow \mathbb{R}$ is continuous with respect to the sup-norm in its first argument and bounded above uniformly by a constant $M$ for all $g$. Then,  
\begin{equation}
        \sqrt{H}\left(\hat{\Delta}(X,Y) - \Delta(X,Y)\right) \stackrel{D}{\longrightarrow} \mathcal{N}\left(0, 
    \mathrm{var}_q\left[ d(F_X^q, F_Y^q)\right] \right),
    \end{equation}
    as $H, N \rightarrow \infty$.
\end{theorem}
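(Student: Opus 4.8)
The plan is to center the estimator and split its fluctuation into a population-level part that carries the limiting Gaussian and an empirical-CDF error part that should be asymptotically negligible. Writing $W_i := d(F_X^{q_{i}}, F_Y^{q_{i}})$ and $\hat{W}_i := d(\hat{F}_X^{q_{i}}, F_Y^{q_{i}})$, I decompose
\begin{equation*}
\sqrt{H}\bigl(\hat{\Delta}(X,Y) - \Delta(X,Y)\bigr) = \underbrace{\frac{1}{\sqrt{H}}\sum_{i=1}^H \bigl(W_i - \Delta(X,Y)\bigr)}_{=:B_H} + \underbrace{\frac{1}{\sqrt{H}}\sum_{i=1}^H \bigl(\hat{W}_i - W_i\bigr)}_{=:A_{H,N}}.
\end{equation*}
The term $B_H$ does not depend on $N$: its summands are i.i.d.\ functions of the $q_{i}$ alone, with mean zero and variance $\mathrm{var}_q[d(F_X^q, F_Y^q)]$, which is finite because $d$ is bounded by $M$. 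Hence the classical central limit theorem gives $B_H \stackrel{D}{\longrightarrow} \mathcal{N}(0,\mathrm{var}_q[d(F_X^q, F_Y^q)])$ as $H \to \infty$, and by Slutsky's theorem it suffices to show $A_{H,N} \stackrel{p}{\longrightarrow} 0$.

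For the second term, set $R_i := \hat{W}_i - W_i$ and $b_N := \mathbb{E}[R_i]$. The $R_i$ are i.i.d.\ across $i$, so $A_{H,N}$ has mean $\sqrt{H}\,b_N$ and variance $\mathrm{var}(R_i) =: v_N$. From the same almost-sure convergence $\hat{W}_i \to W_i$ and boundedness of $d$ used in \Cref{thm:asymptotic_conv_mean}, dominated convergence delivers $\mathbb{E}[R_i^2] \to 0$ as $N \to \infty$; consequently $v_N \to 0$ and $|b_N| \le \sqrt{\mathbb{E}[R_i^2]} \to 0$. Writing $A_{H,N} = \tfrac{1}{\sqrt{H}}\sum_{i=1}^H (R_i - b_N) + \sqrt{H}\,b_N$, the centered sum has variance $v_N \to 0$ and hence vanishes in probability by Chebyshev's inequality, so that $A_{H,N} \stackrel{p}{\longrightarrow} 0$ precisely when the deterministic bias term $\sqrt{H}\,b_N \to 0$.

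This last requirement is where the main difficulty lies, and it is what genuinely couples the two limits. The quantity $\sqrt{H}\,b_N = \sqrt{H}\,(\Delta_N - \Delta)$, with $\Delta_N := \mathbb{E}[\hat{W}_i]$, vanishes only if $N \to \infty$ fast enough relative to $H$ that the empirical-CDF plug-in bias decays faster than $H^{-1/2}$; \Cref{thm:asymptotic_conv_mean} secures $\Delta_N \to \Delta$ but supplies no such rate. I would therefore either (i) state the result under the joint regime $N=N(H)$ with $\sqrt{H}(\Delta_N - \Delta) \to 0$, or (ii) read the limit as iterated---first $N \to \infty$, collapsing $\hat{W}_i$ onto $W_i$ and $A_{H,N}$ onto $0$, then $H \to \infty$ to invoke the CLT on $B_H$. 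Under either reading, combining $B_H \stackrel{D}{\longrightarrow} \mathcal{N}(0,\mathrm{var}_q[d(F_X^q, F_Y^q)])$ with $A_{H,N} \stackrel{p}{\longrightarrow} 0$ through Slutsky's theorem gives the claim. As a single-limit alternative one could apply the Lindeberg--Feller CLT to the triangular array $\{\hat{W}_i - \Delta_N\}_{i=1}^H$, whose Lindeberg condition holds automatically since the $\hat{W}_i$ are uniformly bounded by $M$ while $\mathrm{var}(\hat{W}_i) \to \mathrm{var}_q[d(F_X^q, F_Y^q)] > 0$ by \Cref{cor:asymptotic_cov_var}; but this centers at $\Delta_N$ and thus returns the same bias obstacle, confirming that the joint growth condition cannot be avoided.
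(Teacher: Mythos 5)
Your decomposition is a genuinely different route from the paper's, and its components are sound. The paper works with the triangular array $Z_H^{(i)} = d(\hat{F}_X^{q_i,N_H},F_Y^{q_i}) - \mathbb{E}[d(\hat{F}_X^{q_i,N_H},F_Y^{q_i})]$ for a strictly increasing $N_H$, checks the Lindeberg condition (immediate from boundedness by $M$), and applies the Lindeberg CLT to get $\sqrt{H}\bar{Z}^{(H)} \stackrel{D}{\longrightarrow} \mathcal{N}\left(0,\mathrm{var}_q\left[d(F_X^q,F_Y^q)\right]\right)$, concluding with ``this implies the desired result.'' You instead extract the Gaussian part from the classical i.i.d.\ CLT applied to $W_i = d(F_X^{q_i},F_Y^{q_i})$, which does not involve $N$ at all, and push all the $N$-dependence into $A_{H,N}$, handled by Chebyshev plus a bias term, finishing with Slutsky. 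This is arguably cleaner: no triangular array is needed, and the exact role of the joint limit is isolated in a single explicit term rather than spread across the centering of the array.

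More importantly, the obstacle you flag is genuine, and you should know that the paper's proof has the same hole, concealed in its final sentence. The paper's CLT is centered at $\mathbb{E}[\hat{\Delta}(X,Y)] = \Delta_{N_H}$, i.e., it proves a limit for $\sqrt{H}\left(\hat{\Delta}(X,Y)-\mathbb{E}[\hat{\Delta}(X,Y)]\right)$; recentering at $\Delta(X,Y)$ requires precisely your condition $\sqrt{H}\left(\Delta_{N_H}-\Delta(X,Y)\right)\to 0$, and \Cref{thm:asymptotic_conv_mean} supplies convergence of the bias with no rate. The condition can genuinely fail under the theorem's hypotheses: take $d$ the KS distance and $X \stackrel{D}{=} Y$ with continuous nondegenerate projections, so that $W_i = 0$, $\Delta(X,Y)=0$, and the claimed limit is the point mass at $0$; but $\mathbb{E}[\hat{W}_i] = \mathbb{E}\left[\sup_x \lvert \hat{F}_X^{q_i}(x)-F_X^{q_i}(x)\rvert\right] \asymp N^{-1/2}$ by Glivenko--Cantelli fluctuation bounds, so $\sqrt{H}\,\hat{\Delta}(X,Y)$ has mean of order $\sqrt{H/N}$ and vanishing variance. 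Even the admissible coupling $N_H = H$ then gives convergence in probability to a strictly positive constant, not to $\delta_0$. So the theorem as stated needs either your growth condition (i) or your iterated-limit reading (ii); under either one your argument is complete, and your write-up makes explicit a restriction that the paper elides.
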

\begin{proof}
    Suppose $N_H$ is a strictly increasing function of $H$. Consider the triangular array of random variables 
    \[
\begin{array}{ccccc}
Z_1^{(1)} &        &        &       & \\
Z_2^{(1)} & Z_s^{(2)} &        &   &     \\
\vdots & \vdots & \ddots &        &\\
Z_H^{(1)} & Z_H^{(2)} & \dots & Z_H^{(H)} &\\
\vdots  & \vdots  & \vdots  & \vdots & \ddots \\
\end{array},
\]
    where
    \begin{equation}
    \begin{aligned}
        &\left( Z_H^{(i)}\right)_{i=1}^{H}\eqdef\\
        &\left( d(\hat{F}_X^{q_{i}, N_H}, F_Y^{q_{i}}) - \mathbb{E}\left[ d(\hat{F}_X^{q_{i}, N_H}, F_Y^{q_{i}}) \right]\right)_{i=1}^H,
    \end{aligned}
    \end{equation} and each $\hat{F}_X^{q_{i}, N_H}$ depends on $\left(X_{i, k} \right)_{k=1}^{N_H}$. 
    Obviously
    \begin{equation*}
        \mathbb{E}\left[Z_H^{(i)} \right] = 0,\quad \mathrm{var}\left[ Z_H^{(i)}\right] = \mathrm{var}\left[ d(\hat{F}_X^{q_{1}, N_H}, F_Y^{q_{1}})\right].
    \end{equation*}
    By \Cref{cor:asymptotic_cov_var}, $\mathrm{var}\left[ Z_H^{(i)}\right] \rightarrow \mathrm{var}_q\left[ d(F_X^q, F_Y^q)\right]$ as $H \rightarrow \infty$. We define
    \begin{equation*}
        \bar{Z}^{(H)}:= \dfrac{1}{H}\sum_{i=1}^H Z_i^{(H)}. 
    \end{equation*}

    Because the distance $d$ is bounded, we can easily confirm that the Lindeberg condition 
    \begin{equation}
        \dfrac{1}{H} \sum_{i=1}^H \mathbb{E}\left[ \left(Z_i^{(H)} \right)^2\mathbbm{1}{\left(|Z_i^{(H)}|\geq \epsilon \sqrt{H}\right)}\right] \rightarrow 0,\ \mathrm{as}\ H \rightarrow \infty
    \end{equation}
    holds for all $\epsilon >0$. By applying the Lindeberg central limit theorem \cite{billingsley2017probability}, we obtain
    \begin{equation}
        \sqrt{H}\bar{Z}^{(H)} \stackrel{D}{\longrightarrow} \mathcal{N}\left(0, \mathrm{var}_q\left[ d(F_X^q, F_Y^q)\right]\right),\ \mathrm{as}\ H \rightarrow \infty. 
    \end{equation}
    This implies the desired result.
\end{proof}

\Cref{thm:clt} proves the intuitive fact that an estimator based on empirical averages is asymptotically normally distributed. It also allows us to better understand the behavior of $\hat{\Delta}(X,Y)$. Specifically, suppose $\Delta(X,Y) = \alpha$, where $\alpha \in \left[0, 1\right]$. By approximating $d(F_X^q,F_Y^q)$ as a Bernoulli random variable with probability of success $\alpha$, we get that $\mathrm{var}_q\left[ d(F_X^q, F_Y^q)\right] = \alpha (1-\alpha)$. Therefore, the asymptotic signal-to-noise ratio for the estimator is
$\sqrt{H\alpha/(1-\alpha)}$. This is an  increasing function of $\alpha$. This, in turn, means that it is easier to conclude that two distributions are different than to conclude that two distributions are similar. Further, by increasing the number of samples $H$, we can reduce the standard error as desired.

\section{A Practical Algorithm}\label{sec:algo}

We suggest a practical way to approximately compute $\Delta(X,Y)$ for a specific choice of the function $d$. For this choice of $d$ we argue that $\Delta(X,Y)$ is easily computable and interpretable. We will assume  
\begin{equation}\label{eq:dist_heur}
    d(F_X^q, F_Y^q)\eqdef \int_{-\infty}^{\infty} c(x\mid q)\lvert F_X^q(x) - F_Y^q(x) \rvert dx,
\end{equation}
where $c(x\mid q)$ is a PDF. 
We can write 
\begin{equation}
\begin{aligned}
    &d(F_X^q, F_Y^q)\\
    &=\int_{-\infty}^{\infty} c(x\mid q)\lvert\mathbb{P}\left(q^\top X \geq x \right) - \mathbb{P}\left(q^\top Y \geq x \right) \rvert dx
\end{aligned}
\end{equation}
and interpret $d$ as the expected (with respect to PDF $c(\cdot \vert q)$) absolute difference in probability content of $X$ and $Y$ on half-spaces with normal vector $q$. In this case, we can write 
\begin{equation}\label{eq:interpret}
\begin{aligned}
    &\Delta(X,Y) = \\
    &\int_{\mathbb{S}^n} \int_{-\infty}^\infty p_q(\Tilde{q}) c(x \mid \Tilde{q})
\lvert\mathbb{P}\left(\Tilde{q}^\top X \geq x \right) - \mathbb{P}\left(\Tilde{q}^\top Y \geq x \right) \rvert dx d\Tilde{q},
\end{aligned}
\end{equation}
where $p_q$ is a PDF over $\mathbb{S}^n$. \revstephen{This distance shares similarities with the sliced Wasserstein distances \cite{kolouri2019generalized} and the continuous ranked probability score \cite{gneiting2007strictly}.}

Eq.~\eqref{eq:interpret} shows that $\Delta(X,Y)$ effectively 
samples random half-spaces and computes the difference in probability within each half-space for the two random variables $X$ and $Y$. Therefore $\Delta(X,Y)$ can be interpreted as the expected difference in probability content within a half-space for the two distributions. 

Because $\lvert\mathbb{P}\left(\Tilde{q}^\top X \geq x \right) - \mathbb{P}\left(\Tilde{q}^\top Y \geq x \right) \rvert \in \left[ 0, 1\right]$, we notice that $\Delta(X,Y)$ is constrained between $0$ and $1$.
Although, we will not discuss it further, this makes $\Delta(X,Y)$ ideal as a reward for sequential decision-making algorithms like Monte Carlo tree search (MCTS), where scaling the cost can significantly improve performance \cite{kochenderfer2022algorithms}.

In various applications one of the random variables, $X$ or $Y$, is represented as a set of samples, while the PDF of the other is given explicitly. This is common, for example, in distribution steering, where the target distribution is given as a PDF, while the the current state is given as a set of samples. Suppose that the PDF of $Y$ is given as a Gaussian mixture model (GMM). 
The GMM is a universal approximator, in the sense that any smooth density can be approximated with arbitrary error by a GMM \cite{bengio2017deep}. We show that $\mathbb{P}\left(q^\top Y + b  \geq 0\right)$ can be computed in closed form. This holds because 
\begin{equation}\label{eq:analyt_dist}
    \mathbb{E}_{y \sim \mathcal{N}(\bar{y}, \Sigma)} \mathbbm{1}{\left(q^T y + b  \geq 0\right)} = \mathbb{P}\left( r \geq 0 \right),
\end{equation}
where $r \sim \mathcal{N}(q^T \bar{y} + b, q^T \Sigma q)$ is a one-dimensional Gaussian. The last probability can be easily computed. Therefore, if $Y$ follows a  GMM density $\mu_Y$, we can compute $\mathbb{E}_{y \sim \mu_Y} \mathbbm{1}{\left(q^T y + b  \geq 0\right)}$ as follows: we weight the probabilities of the form (\ref{eq:analyt_dist}) for each component of the GMM by their corresponding weights and add the results.

Assuming $X$ is given as a set of samples $\lbrace x^{i} \rbrace_{i=1}^N$ and the PDF of $Y$, $\mu_Y$, is known explicitly, we show how to approximate $\Delta(X,Y)$ in \Cref{alg:dist_heur} \revboyd{with complexity $\mathcal{O}\left(HN(n + n_\mathrm{values})\right)$. }\revboyd{We denote the output of \Cref{alg:dist_heur} as $D_{H, n\mathrm{values}}(X,Y)$. } 

\begin{algorithm}
\caption{Approximation of $\Delta(X, Y)$ with $d$ as in \eqref{eq:dist_heur}}\label{alg:dist_heur}
\begin{algorithmic}[1]
\State \textbf{Inputs:} density $\mu_Y$ of $Y$, samples $\lbrace x_{i} \rbrace_{i=1}^N$ of $X$
\State \textbf{Parameters:}
$H$, $n_\mathrm{values}$,
\State $\Delta \leftarrow 0$
\For{$k = 1, \dots, H$}
    \State get a half-space $q_{k} \in \mathbb{S}^n$ according to the density $p_q(\cdot)$
    \For{$j = 1, \dots, n_\mathrm{values}$}
        \State get a value $b_{j} \in \mathbb{R}$ according to the density $c(\cdot \mid q_{k})$
        \State $p_{\mathrm{samples}} \leftarrow \dfrac{1}{N} \sum_{i=1}^N \mathbbm{1}\left(q_{k}^\top x_{i} \geq b_{j}\right)$
        \State $p_{\mu_Y} \leftarrow \mathbb{E}_{y \sim \mu_Y} \mathbbm{1}{\left(q_{k}^\top y \geq b_{j}\right)} $
        \State $\Delta \leftarrow \Delta + \dfrac{1}{H n_\mathrm{values}}\lvert p_\mathrm{samples} - p_{\mu_Y} \rvert$
    \EndFor
\EndFor
\State \Return $\Delta$
\end{algorithmic}
\end{algorithm}

An important aspect in \Cref{alg:dist_heur} is the procedure for obtaining the vectors $q_{k}$ and the values $b_{j}$. \revstephen{The approximate value for $\Delta(X,Y)$, $D_{H, n\mathrm{values}}(X,Y)$,  depends on this choice.}
Although \Cref{alg:dist_heur} works for arbitrary densities $p_q(\cdot)$ and $c(\cdot \vert q)$, uniformly sampling the directions $q_{k}$ from the unit sphere and picking the $b_{j}$ as the quantiles of the sample set $\{ q_k^\top x_{i}\}_{i=1}^N$ worked well in practice. 

The modifications necessary to \Cref{alg:dist_heur} in the case that $Y$ is represented as a set of samples or the PDF of $X$ is given explicitly are obvious.

\subsection{Differentiability Considerations}
In many applications, it is important for the estimate $D_{H, n\mathrm{values}}(X,Y)$ to be differentiable with respect to the samples $\{ x_{i} \}_{i=1}^N$. For example, in distribution steering we want to find a controller that steers the distribution of the samples $\{ x_{i} \}_{i=1}^N$ towards a target distribution $\mu_Y$. The estimate provided by \Cref{alg:dist_heur} is not differentiable, because the indicator function in line 8 is not differentiable.
We can obtain a differentiable modification of the estimate by replacing $\mathbbm{1}(q_{k}^\top x_{i} \geq b_{j})$ with $\sigma\left(V(q_{k}^\top x_{i} - b_{j})\right)$, where $\sigma$ is the sigmoid function and $V$ is a large scalar. We pick $V=100$, which makes the resulting function steep near the origin. \revboyd{Below, when taking the gradient of $D_{H, n\mathrm{values}}(X,Y)$ we assume the sigmoid approximation is used. The gradient of $D_{H, n\mathrm{values}}(X,Y)$ is a stochastic gradient of $\Delta(X,Y)$. }


\section{The Two-Sample Test}\label{sec:two_sample}

An important aspect of the distance estimate $D_{H, n\mathrm{values}}(X,Y)$ by \Cref{alg:dist_heur} is its ability to distinguish sets of samples produced by different distributions and identify sets of samples produced by the same distribution. We demonstrate the effectiveness of $D_{H, n\mathrm{values}}(X,Y)$ empirically with a two-sample test. We consider two sets of samples $\{ x_i\}_{i=1}^N$ and $\{ y_i\}_{i=1}^N$ in $\mathbb{R}^n$ produced by distributions $\mu_X$ and $\mu_Y$, respectively, and the two cases:
\begin{equation}
    \mathrm{H_0}: \mu_X = \mu_Y,\quad \mathrm{H_A}: \mu_X \neq \mu_Y.
\end{equation}

We look at the empirical distribution of $D_{H, n\mathrm{values}}(X,Y)$ in the null $\mathrm{H_0}$ and alternative $\mathrm{H_A}$ cases for $n\mathrm{\ =2}$ and $n\mathrm{ =50}$. The randomness in $D_{H, n\mathrm{values}}(X,Y)$ comes from the drawn samples, $\{ x_i\}_{i=1}^N$ and $\{ y_i\}_{i=1}^N$, and the half-spaces in \Cref{alg:dist_heur}. We set $N\mathrm{ =1000}$, $H\mathrm{ =300}$ and $n_\mathrm{values}\mathrm{ =100}$. In the null case $\mathrm{H_0}$ the samples are produced from $\mu_X$ and $\mu_Y$ that are $\mathcal{N}(0,I)$. For each $n$, we consider two alternative cases: $\left(\mu_X = \mathcal{N}(0, I), \mu_Y = \mathcal{N}(0.2, I)\right)$ and $\left(\mu_X = \mathcal{N}(0, I), \mu_Y = \mathcal{N}(0, 1.3I)\right)$.
The results are included in \Cref{fig:two_sample}. The distributions of $D_{H, n\mathrm{values}}(X,Y)$ in the null and alternative case are easily discernible. 


\begin{figure*}[t]
    \centering
    
    \begin{subfigure}[b]{\textwidth}
        \centering
        \begin{subfigure}[b]{0.4\textwidth}
            \centering
            \includegraphics[width=\textwidth]{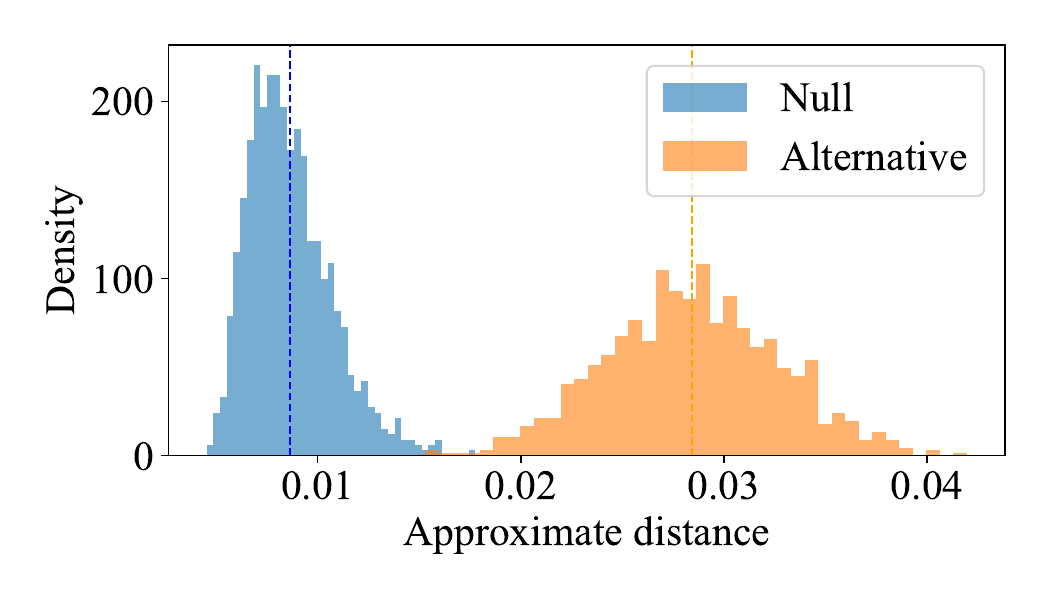}
            \caption*{Alternative $\left(\mu_X = \mathcal{N}(0, I), \mu_Y = \mathcal{N}(0.2, I)\right)$.}
        \end{subfigure}
        \hfill
        \begin{subfigure}[b]{0.4\textwidth}
            \centering
            \includegraphics[width=\textwidth]{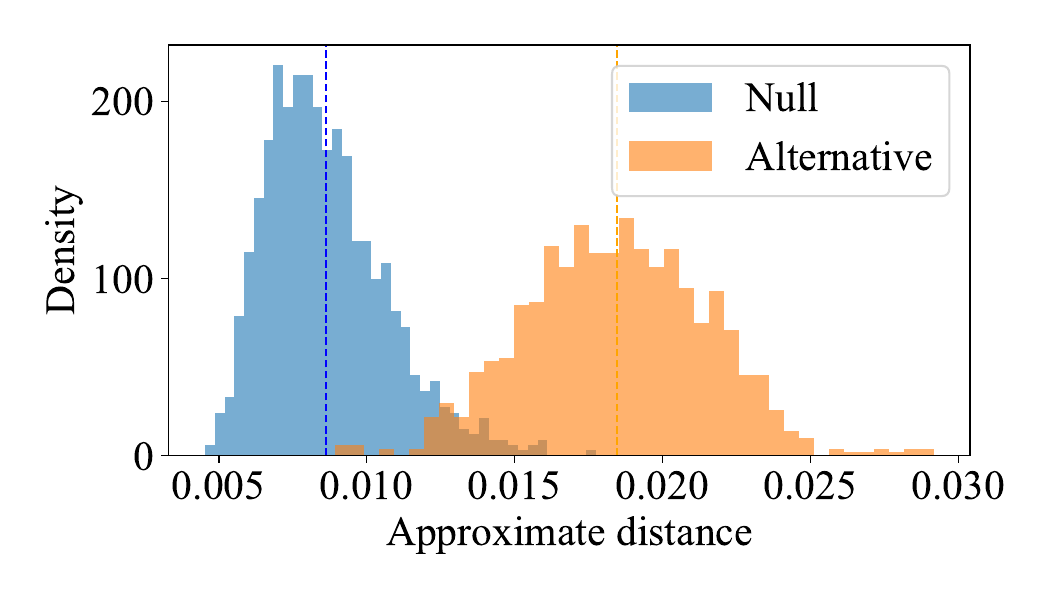}
            \caption*{Alternative $\left(\mu_X = \mathcal{N}(0, I), \mu_Y = \mathcal{N}(0, 1.3I)\right)$.}
        \end{subfigure}
        \caption{Dimensionality $n = 2$.}
        \label{fig:two_sample_n2}
    \end{subfigure}
    
    \vspace{0.5cm}
    
    \begin{subfigure}[b]{\textwidth}
        \centering
        \begin{subfigure}[b]{0.4\textwidth}
            \centering
            \includegraphics[width=\textwidth]{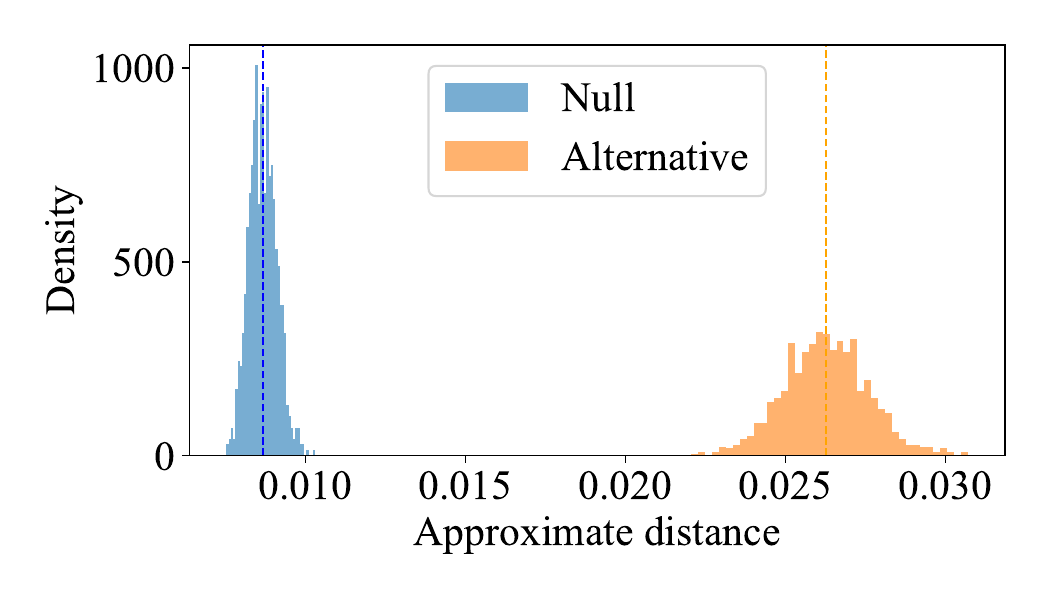}
            \caption*{Alternative $\left(\mu_X = \mathcal{N}(0, I), \mu_Y = \mathcal{N}(0.2, I)\right)$.}
        \end{subfigure}
        \hfill
        \begin{subfigure}[b]{0.4\textwidth}
            \centering
            \includegraphics[width=\textwidth]{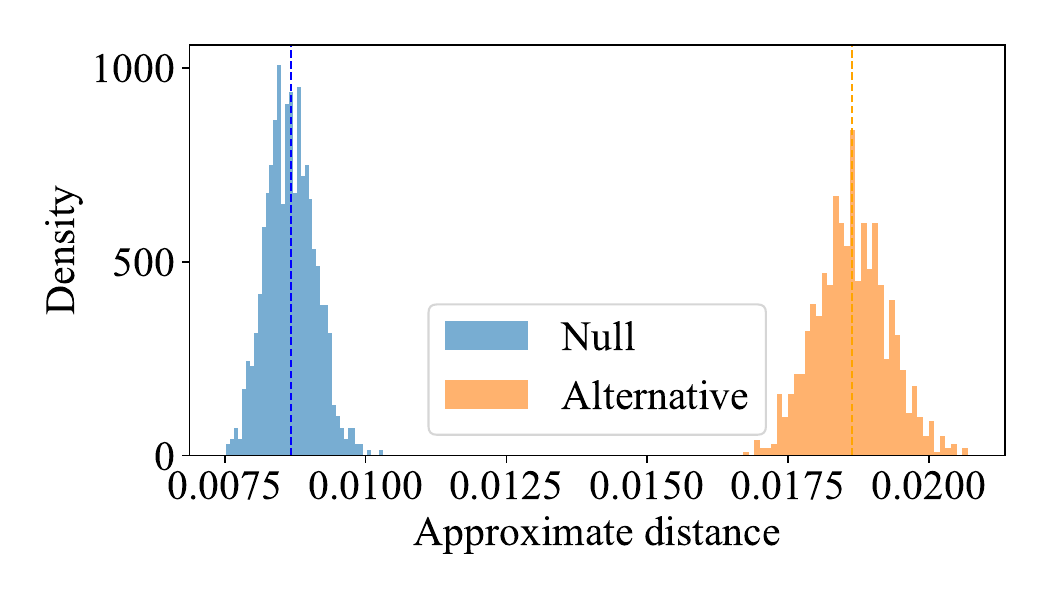}
            \caption*{Alternative $\left(\mu_X = \mathcal{N}(0, I), \mu_Y = \mathcal{N}(0, 1.3I)\right)$.}
        \end{subfigure}
        \caption{Dimensionality $n = 50$.}
        \label{fig:two_sample_n50}
    \end{subfigure}
    
    \caption{The empirical distribution of $D_{300, 100}(X,Y)$ for the null case and different alternative cases in the two-sample test. The randomness comes from the drawn samples for $X$ and $Y$ and the selection of half-spaces in \Cref{alg:dist_heur}. In all experiments, the null case is $\mu_X=\mu_Y=\mathcal{N}(0,I)$. Each experiment corresponds to a particular choice of the dimensionality $n$ and the alternative case. The distribution of $D_{300, 100}(X,Y)$ is given in blue for the null case and in orange for the alternative. The mean values are denoted by the vertical dotted lines.}
    \label{fig:two_sample}
\end{figure*}
\section{Distribution Steering}\label{sec:distr_steer}
In distribution steering, we consider a distribution for the state of a dynamical system at any time $t$ and the goal is to match the state distribution with a target distribution as $t \rightarrow \infty$.
We suppose the dynamical system 
\begin{equation}\label{eq:dyn}
    x_{t+1} = f(x_t, u_t),
\end{equation}
where $x_t \in \mathbb{R}^n$, $u_t \in \mathbb{R}^m$, and $f: \mathbb{R}^n \times \mathbb{R}^m \rightarrow \mathbb{R}^n$. We suppose that the state at the initial time $t=0$ follows a distribution $\mu_\mathrm{start}$, given as a set of samples $\{ x_i^{(0)} \}_{i=1}^N$. The distribution of the state at time $t$ is represented by the set of samples $\{ x_i^{(t)} \}_{i=1}^N$, in general. Our task is to design a closed-loop controller that steers the distribution of the state at $t$ towards the distribution $\mu_Y$, as $t\rightarrow \infty$. We choose the controller family $u_t = K_t x_t + b_t$, with parameters $(K_t, b_t)$. This is a common choice in the literature \cite{rapakoulias2024discrete, rapakoulias2023discrete}.

Our method for computing the controller at $t$, given the state distribution at $t$, $\{ x_i^{(t)} \}_{i=1}^N$, is as follows. 
Suppose we start with the controller $\pi_t=(K_t, b_t)$. Then, the samples at the next time, $t+1$, will be $\lbrace x^{(t+1)}_i\rbrace_{i=1}^N$, where
\begin{equation}
    x_i^{(t+1)} = f\left(x_i^{(t)}, \pi_t(x_i^{(t)})\right),\quad \pi_t(x_i^{(t)})=K_t x_i^{(t)} + b_t.
\end{equation}
Via a recursion we can obtain the samples after $\tau$ timesteps, $\{ x_i^{(t+\tau)}(K_t, b_t)\}_i$, where we make the dependence on $K_t$ and $b_t$ explicit. By applying \Cref{alg:dist_heur}, we can obtain the distance estimate $D_{H, n\mathrm{values}}\left(\lbrace x_i^{(t+\tau)}(K_t, b_t) \rbrace_i, \mu_Y\right)$, which is a differentiable function of the controller $(K_t, b_t)$. Therefore, to select the controller for time $t$, we start with an initial controller and apply gradient descent with step-size $\rho$ to improve our controller design:
\begin{equation}\label{eq:baseline}
    \pi_t \leftarrow \pi_t - \rho \nabla_{K_t, b_t}D_{H, n\mathrm{values}}\left(\lbrace x_i^{(t+\tau)}(K_t, b_t) \rbrace_i, \mu_Y\right).
\end{equation}
Although the method to compute the controllers is presented in \cref{eq:baseline} using simple gradient descent, we can apply more advanced descent algorithms, such as RMSProp, Adagrad, and Adam. We can use multiple restarts and other empirical techniques to improve the robustness and convergence of the optimization process. The ability to apply any of these methods however relies on the fact that our distance metric in the space of distributions is an easily computable and differentiable function of the controller.

\revboyd{We run variants of our algorithm \eqref{eq:baseline}, where we vary $H$ and $n_\mathrm{values}$. However, we judge the performance of all variants by looking at the estimate $D_{300, 100}$ for the distance between the state distribution and the target over time. We use $N=3000$.} For each $q_k$, we use equally spaced values between the $0.5$th and $99.5$th percentiles of $q_k^\top Y$, where $Y\sim \mu_Y$, for the $b_j$s. We set $\tau=25$ and only update the controller every $20$ steps.

We evaluate our algorithm on the unicycle dynamics model, which is nonlinear and given by the equation
\begin{equation}
    \begin{bmatrix}
        \chi_{t+1} \\
        \psi_{t+1}\\
        \theta_{t+1}
    \end{bmatrix} = \underbrace{\begin{bmatrix}
        \chi_t + u_{1,t} \Delta t \cos(\theta_t)\\
        \psi_t + u_{1,t} \Delta t \sin(\theta_t) \\
        \theta_t + u_{2,t}\Delta t
    \end{bmatrix}}_f,
\end{equation}
where $u_{1,t}$ is the linear velocity input and $u_{2,t}$ is the angular velocity input. We will only look at the distribution of the Cartesian position $(\chi, \psi)$. We use $\Delta t=0.1$. We will assume zero initial heading and
\begin{equation}
    \mu_\mathrm{start} = \mathcal{N}\left(\begin{bmatrix}
    -2\\
    -2
\end{bmatrix}, I\right),\ \mu_Y = \mathcal{N}\left(\begin{bmatrix}
    3\\
    2
\end{bmatrix}, \begin{bmatrix}
    2 &1.5\\
    1.5 &2
\end{bmatrix}\right)
\end{equation}
for the distributions of the position. We do not constrain the distribution of the heading at the final time and the initial heading is $0$.

Our results are summarized in \Cref{fig:distr_steering}. We observe that our algorithm \eqref{eq:baseline} is sufficient to steer the state distribution of the unicycle model. We cannot exactly reach the target distribution, because the unicycle dynamics are non-linear, the controller is affine, and gradient descent is not guaranteed to converge to the global optimum. \revboyd{The oscillations for some variants in \Cref{fig:distr_steer_3} are due to the unicycle dynamics and the approximation error of the distance estimate for these variants in \eqref{eq:baseline}.}

\begin{figure*}[ht]
    \centering
    \begin{subfigure}[t]{0.34\textwidth}
        \centering
        \includegraphics[width=\textwidth]{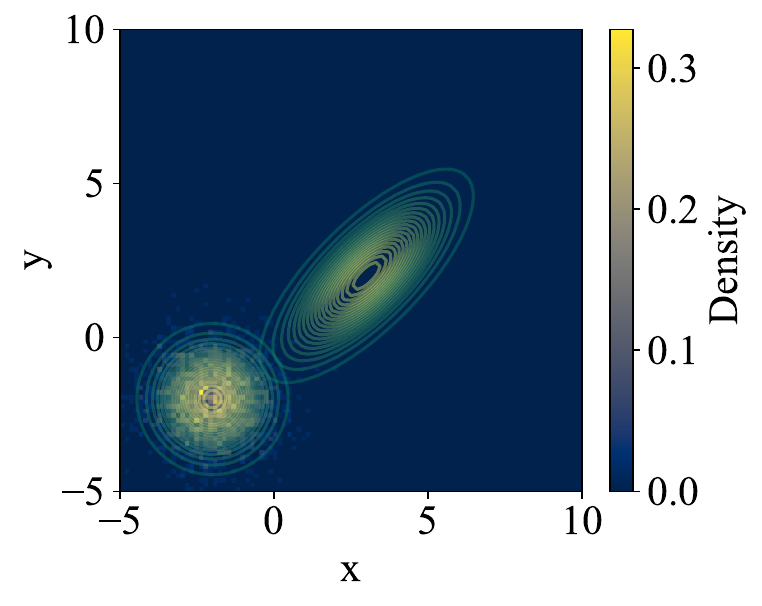}
        \caption{The sample density for the distribution of the initial position. The contours of the initial position distribution, $\mu_\mathrm{start}$, and the target distribution, $\mu_Y$ are also included.}
        \label{fig:distr_steer_1}
    \end{subfigure}
    \begin{subfigure}[t]{0.34\textwidth}
        \centering
        \includegraphics[width=\textwidth]{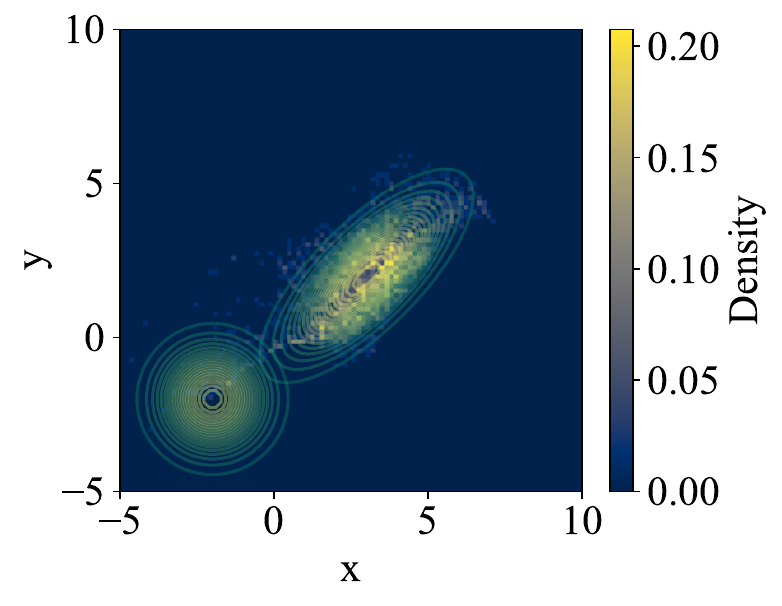} 
        \caption{The sample density for the distribution of the final position for the algorithm variant of \eqref{eq:baseline} with $H=200$ and $n_\mathrm{values}=400$. 
        Our proposed method steers the position distribution onto the target distribution.}
        \label{fig:distr_steer_2}
    \end{subfigure}
    \begin{subfigure}[t]{0.3\textwidth}
        \centering
        \includegraphics[width=\textwidth]{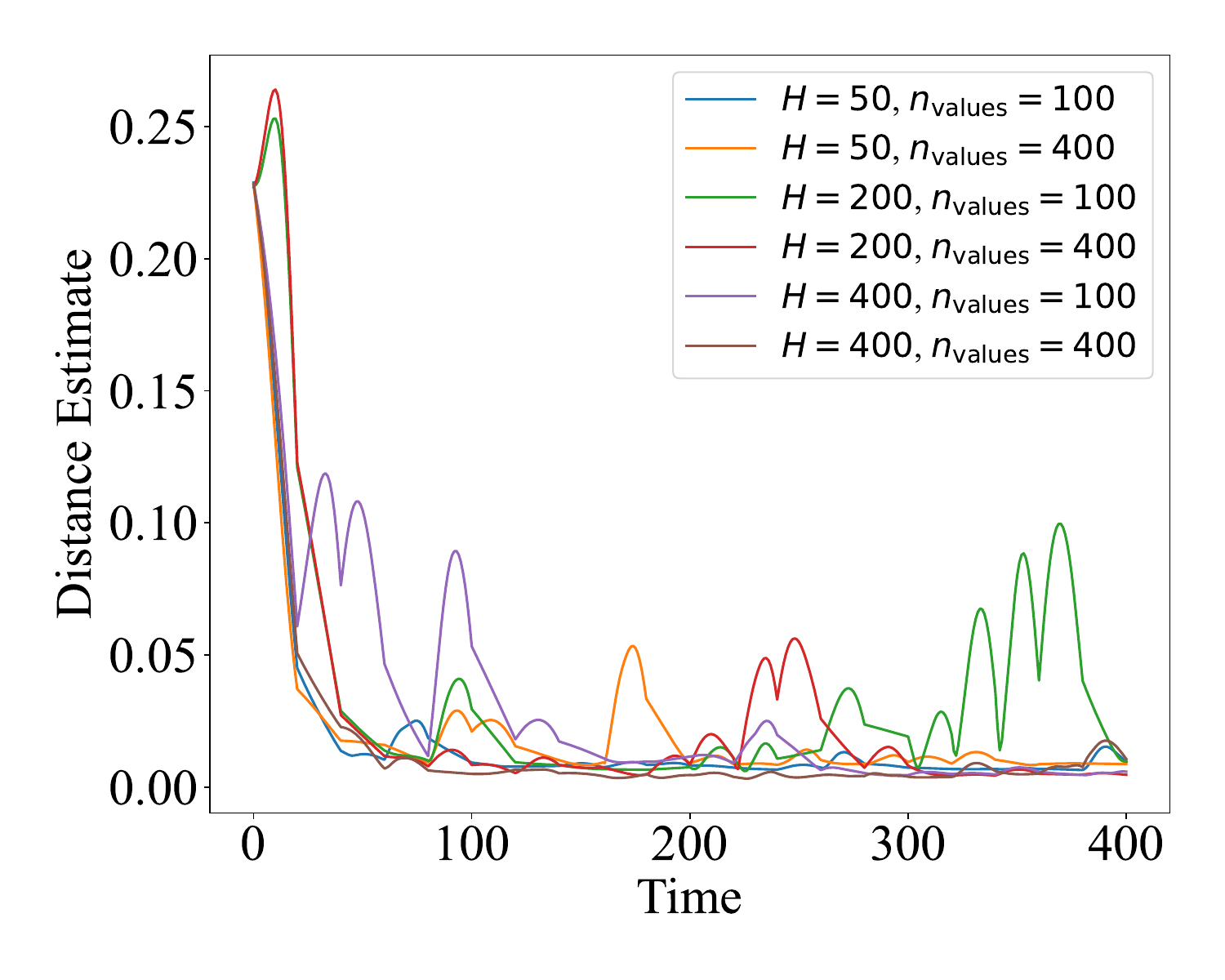}
        \caption{Running distance estimate $D_{300, 100}\left(\cdot, \mu_Y\right)$ between the sample distribution of positions for state $t$ and the target position distribution, for different variants of algorithm \eqref{eq:baseline}.}
        \label{fig:distr_steer_3}
    \end{subfigure}

    \caption{Results for the distribution steering application.}
    \label{fig:distr_steering}
\end{figure*}
\section{Ergodic Control}\label{sec:ergodic}

The goal of ergodic control is to control the empirical distribution of a dynamical system's state as given by its trajectory over time \cite{miller2013trajectory, dressel2019tutorial}. This is different than the goal of distribution steering, where we control the distribution of the dynamical system's state at any given time. 
We formulate the ergodic control problem as follows. We consider the dynamics \eqref{eq:dyn} and select the control sequence $u =: (u_0, \dots, u_T)$, such that the distance 
\begin{equation}
    D_{H, n\mathrm{values}}\left(\{x_t \}_{t=0}^{T+1}, \mu_Y\right)
\end{equation}
is minimized, where $x_{t+1} = f(x_t, u_t)$. Again, we can solve this problem using gradient descent
\begin{equation}\label{eq:gd_ergodic}
\begin{aligned}
    u \leftarrow u -
    \rho \nabla_{u} D_{H, n_\mathrm{values}}\left(\{x_t \}_{t=0}^{T+1}, \mu_Y\right)
    \end{aligned}
\end{equation}
or a more advanced descent algorithm.

\revboyd{We assume the dynamics are unicycle, the initial state is $x_0=(0,0,0)$, $T=5,000$, and the target position distribution $\mu_Y$ is a GMM of two components. We do not constrain the distribution of the heading trajectory. We run variants of \eqref{eq:gd_ergodic} with different $H$ and $n_\mathrm{values}$. However, we evaluate the performance of all variants using $D_{400, 400}$.} We use AdamW \cite{loshchilov2019decoupledweightdecayregularization}, instead of the simple gradient descent update in \eqref{eq:gd_ergodic}. For all $D_{H, n_\mathrm{values}}(\cdot, \mu_Y)$, for each $q_k$, we use equally spaced values between the $0.5$th and $99.5$th percentiles of $q_k^\top Y$, where $Y\sim \mu_Y$, for the $b_j$s. 

\begin{figure*}[t!]
    \centering
    \begin{subfigure}[t]{0.69\textwidth}
        \centering
        \includegraphics[width=\textwidth]{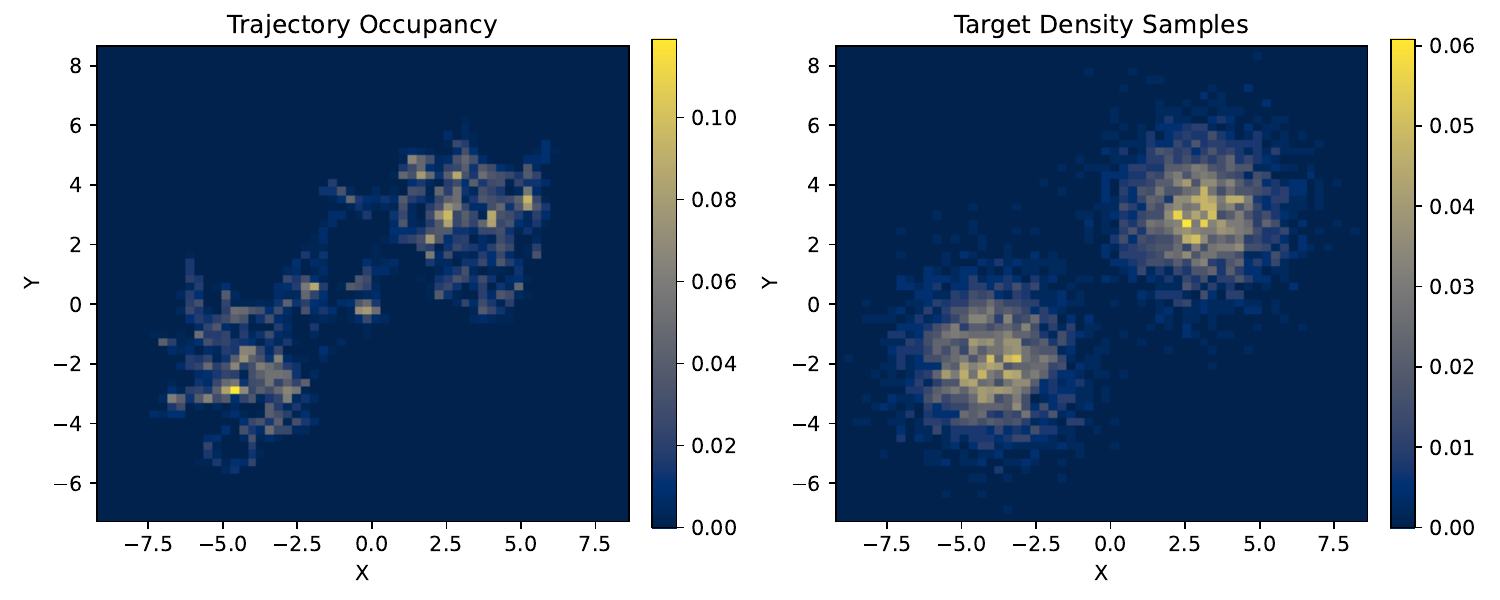}
        \caption{The state trajectory for the position, obtained by our algorithm \eqref{eq:gd_ergodic} and represented as a sample density on the left, and the target GMM position density on the right. We observe that the empirical distribution of the position trajectory is close to the target distribution. On the left, the trajectory was obtained by the variant of algorithm \eqref{eq:gd_ergodic} with $H=50$ and $n_\mathrm{values}=400$. All other variants also get close to the target distribution, as indicated by \Cref{fig:ergodic_2}. }
        \label{fig:ergodic_1}
    \end{subfigure}
    \hfill
    \begin{subfigure}[t]{0.3\textwidth}
        \centering
        \includegraphics[width=\textwidth]{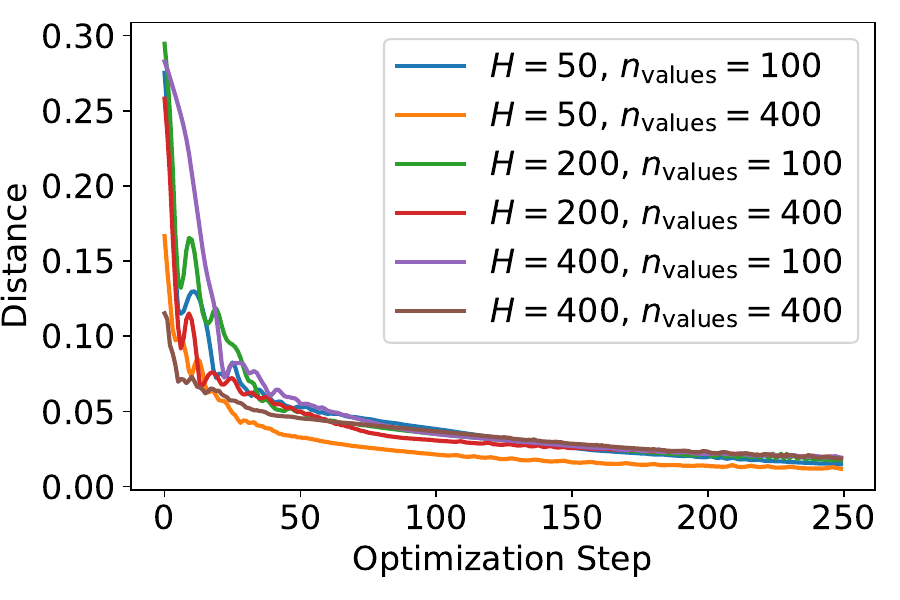} 
        \caption{Distance estimate $D_{400, 400}\left(\{x_t \}_{t=0}^{T+1}, \mu_Y\right)$ of the the position trajectory to the target distribution as a function of optimization step for variants of algorithm \eqref{eq:gd_ergodic}.}
        \label{fig:ergodic_2}
    \end{subfigure}

    \caption{Results for the ergodic control application.}
    \label{fig:ergodic}
\end{figure*}

Our results are summarized in \Cref{fig:ergodic}. If we look at the position trajectory $x_0, \dots, x_{T+1}$ as a set of samples, we notice that the empirical distribution is close to the target distribution $\mu_Y$. This is also verified by the loss at every gradient descent step that decreases until it reaches a point close to $0$. The loss at every optimization step is the distance of the position trajectory at that step to the target distribution.

\section{Conclusion}\label{sec:conclusion}
We suggest a family of distances between distributions that is interpretable and admits an easy-to-compute and differentiable approximation. We further present proof-of-concept applications of the proposed distance to control tasks. Future work will focus on  the convergence of \Cref{alg:dist_heur} as a function of the problem dimensionality and the sensitivity of \Cref{alg:dist_heur} to the selection of half-spaces.




\section*{Acknowledgments}

\footnotesize
Toyota Research Institute (TRI) provided funds to assist the authors with their research, but this article solely reflects the opinions and conclusions of its authors and not TRI or any other Toyota entity. The NASA University Leadership Initiative (grant $\#$80NSSC20M0163) provided funds to assist the first author with their research, but this article solely reflects the opinions and conclusions of its authors and not any NASA entity. For the first author, this work was also partially funded through the Alexander S. Onassis Foundation Scholarship program. The authors thank Logan Bell and Emmanuel Candés for their feedback.

\renewcommand*{\bibfont}{\footnotesize}
\printbibliography

\end{document}